\newcommand{\nix}[1]{}
\newtheorem{theorem}{Theorem}
\newtheorem{lemma}[theorem]{Lemma}
\begin{document}

\title{A Note on Quantum Hamming Bound}
\author{Salah A. Aly \\ Department  of Computer Science,\\ Texas A\&M
University,\\
College Station, TX 77843-3112, USA \\
Email: salah@cs.tamu.edu\\
}
\date{}
\maketitle Proving the quantum Hamming bound for degenerate nonbinary
stabilizer codes has been an open problem for a decade.  In this note, I prove
this bound for double error-correcting degenerate stabilizer codes. Also, I
compute the maximum length of single and double error-correcting MDS stabilizer
codes over finite fields.

\section{Bounds on Quantum Codes}\label{sec:bounds}

Quantum stabilizer codes are a known class of quantum codes that can protect
quantum information against noise and decoherence. Stabilizer codes can be
constructed from self-orthogonal or dual-containing classical codes, see for
example\cite{calderbank98,ketkar06,gottesman97} and references therein. It is
desirable  to study upper and lower bounds on the minimum distance of classical
and quantum codes, so the computer search on the code parameters can be
minimized. It is  a well-known fact that Singleton and Hamming bounds hold for
classical codes~\cite{huffman03}. Also, upper and lower bounds on the
achievable minimum distance of quantum stabilizer codes are needed. Perhaps the
simplest upper bound is the quantum Singleton bound, also known as the
Knill-Laflamme bound. The binary version of the quantum Singleton bound was
first proved by Knill and Laflamme in~\cite{knill97}, see
also~\cite{ashikhmin99,ashikhmin00b}, and later generalized by Rains using
weight enumerators in~\cite{rains99}.

\begin{theorem}[Quantum Singleton Bound]\label{th:singleton}
An  $((n,K,d))_q$ stabilizer code with $K>1$ satisfies \begin{eqnarray} K\le
q^{n-2d+2}.\end{eqnarray}
\end{theorem}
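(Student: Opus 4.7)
The plan is to derive the bound from the erasure-correction characterization of the minimum distance together with subadditivity of the von Neumann entropy, applied symmetrically on two disjoint ``erasure windows'' of size $d-1$.

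The starting point is the operational fact that an $((n,K,d))_q$ code corrects any $d-1$ simultaneous erasures. Equivalently, in its decoupled form: if $R$ is an auxiliary reference system of dimension $K$ maximally entangled with the code space and $|\Phi\rangle_{RQ}$ is encoded into the $n$ physical qudits, then for every subset $A$ of qudits with $|A|\le d-1$ the reduced state factors as $\rho_{RA}=\rho_R\otimes\rho_A$. This decoupling will be the only property of the code that the proof uses; no stabilizer structure enters, so the argument handles arbitrary $((n,K,d))_q$ codes.

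Assuming $n\ge 2d-2$ (otherwise the claimed inequality with $K>1$ is already vacuous), I would partition the $n$ qudits into three disjoint sets $A$, $B$, $C$ with $|A|=|B|=d-1$ and $|C|=n-2d+2$. Since the global state on $RABC$ is pure, $S(RA)=S(BC)$ and $S(RB)=S(AC)$. Decoupling on $A$ gives $S(RA)=S(R)+S(A)=\log K+S(A)$, hence $S(BC)=\log K+S(A)$; subadditivity $S(BC)\le S(B)+S(C)$ then yields
\begin{equation*}
\log K \le S(B)+S(C)-S(A).
\end{equation*}
Interchanging the roles of $A$ and $B$ gives $\log K\le S(A)+S(C)-S(B)$. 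Adding the two inequalities, $2\log K\le 2S(C)\le 2|C|\log q = 2(n-2d+2)\log q$, which rearranges to $K\le q^{n-2d+2}$.

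The substantive step is the equivalence between minimum distance $d$ and the decoupling $\rho_{RA}=\rho_R\otimes\rho_A$ for every correctable erasure set $A$; this is a direct consequence of the Knill--Laflamme conditions and is where the quantum content lives. Everything afterwards is routine manipulation of entropies, and the key sharpening idea is that using two disjoint erasure windows of size $d-1$ (instead of a single one, which only yields the classical bound $K\le q^{n-d+1}$) is exactly what produces the quantum Singleton bound $K\le q^{n-2d+2}$.
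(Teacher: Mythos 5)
Your argument is essentially correct and is, in substance, the Knill--Laflamme entropic proof of the quantum Singleton bound (the one the paper cites for the binary case), transplanted to $q$-ary alphabets. Note that the paper itself offers no proof of Theorem~\ref{th:singleton}: it simply quotes the result and points to \cite{knill97} and to Rains's weight-enumerator derivation \cite{rains99}. Your route is therefore different from the second of those references --- Rains works with the quantum MacWilliams/shadow enumerators and linear programming, in the same spirit as Theorem~\ref{th:lp2} of this paper, whereas you use only purification, the decoupling form of erasure correctability, and subadditivity of the von Neumann entropy. What your approach buys is generality and self-containment: it applies to arbitrary (in particular degenerate, non-stabilizer) $((n,K,d))_q$ codes, and the only nontrivial input is the equivalence between ``distance $d$'' and $\rho_{RA}=\rho_R\otimes\rho_A$ for all $|A|\le d-1$, which you correctly identify as the consequence of the Knill--Laflamme conditions that carries the quantum content. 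The enumerator approach, by contrast, is the one that scales to the Hamming-type bounds that are the actual subject of this paper.

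One loose end: the case $n<2d-2$ is not ``vacuous.'' There the claimed bound reads $K\le q^{n-2d+2}<1$, so the theorem is asserting that no such code with $K>1$ exists, and your partition into two disjoint sets of size $d-1$ is unavailable. You should close this by covering all $n$ qudits with two sets $A,B$ of size at most $d-1$ each (possible exactly when $n\le 2d-2$); decoupling on both together with purity of $RAB$ gives $\log K+S(A)=S(B)$ and $\log K+S(B)=S(A)$, hence $K=1$, a contradiction. With that two-line addendum the proof is complete.
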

Codes which meet the quantum Singleton bound with equality are called quantum
MDS codes. If we assume that $K=q^k$, then this bound can be stated as $k\leq
n-2d+2$. In~\cite{ketkar06} It has been shown that these codes cannot be
indefinitely long and showed that the maximal length of a $q$-ary quantum MDS
codes is upper bounded by $2q^2-2$. This could probably be tightened to
$q^2+2$. It would be interesting to find quantum MDS codes of length greater
than $q^2+2$ since it would disprove the MDS conjecture for classical codes
\cite{huffman03}. A related open question is regarding the construction of
codes with lengths between $q$ and $q^2-1$. At the moment there are no
analytical methods for constructing a quantum MDS code of arbitrary length in
this range (see \cite{grassl04} for some numerical results).

Another important bound for quantum codes is the quantum Hamming bound. The
quantum Hamming bound states (see~\cite{gottesman96,feng04}) that:
\begin{theorem}[Quantum Hamming Bound]\label{th:hamming}
Any pure $((n,K,d))_q$ stabilizer code satisfies
\begin{eqnarray}\sum_{i=0}^{\lfloor (d-1)/2\rfloor} \binom{n}{i}(q^2-1)^i \le
q^n/K.\end{eqnarray}
\end{theorem}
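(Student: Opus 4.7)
My approach is the standard sphere-packing (volume) argument inside the ambient Hilbert space $\C^{q^n}$ of dimension $q^n$, with the purity hypothesis doing the critical orthogonality work.

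First, I would count the correctable Pauli operators. Set $t = \lfloor (d-1)/2 \rfloor$. The generalized Pauli group on $n$ qudits is generated by tensor products of single-qudit operators drawn from a set of $q^2$ basis elements (the identity together with $q^2-1$ nontrivial shift-and-phase operators), so the number of Pauli operators of Hamming weight exactly $i$ on the $n$ qudits is $\binom{n}{i}(q^2-1)^i$. Summing over $0 \le i \le t$ yields the left-hand side of the claimed bound.

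Second, I would invoke the Knill--Laflamme error-correction conditions together with the purity assumption to show that the images $E_a C$ and $E_b C$ are orthogonal subspaces of $\C^{q^n}$ for distinct Pauli operators $E_a \neq E_b$ of weight at most $t$. Indeed, the product $E_a^\dagger E_b$ is a nontrivial Pauli operator of weight at most $2t \le d-1$; by purity, no such operator can have nonzero expectation on the code, so $\langle\psi|E_a^\dagger E_b|\phi\rangle = 0$ for every $|\psi\rangle,|\phi\rangle \in C$. Since each $E_a$ is unitary, each $E_a C$ has dimension $K$. Dimension counting then gives
\begin{equation}
K \sum_{i=0}^{\lfloor (d-1)/2 \rfloor} \binom{n}{i}(q^2-1)^i \;\le\; q^n,
\end{equation}
and dividing by $K$ finishes the proof.

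The main obstacle is step two: purity is precisely what turns the Knill--Laflamme scalars $c_{ab}$ appearing in $\Pi_C E_a^\dagger E_b \Pi_C = c_{ab}\Pi_C$ into $\delta_{ab}$, and it is this substitution that converts the abstract correction criterion into genuine orthogonality of the ``spheres'' $E_a C$. In the degenerate regime the $c_{ab}$ need not vanish for $a \neq b$, the spheres around distinct correctable errors may overlap, and this naive volume argument collapses; that is precisely why the Hamming bound for degenerate nonbinary codes has remained open and why the paper must replace step two with a finer counting in order to treat the $d=5$ case.
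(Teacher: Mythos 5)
Your argument is correct: the count of weight-$i$ Pauli error classes is indeed $\binom{n}{i}(q^2-1)^i$, and purity (tracelessness of the Knill--Laflamme scalars for every nontrivial Pauli of weight $\le 2t \le d-1$) makes the subspaces $E_aC$ pairwise orthogonal, so the dimension count in $\C^{q^n}$ gives the bound. Note, however, that the paper does not prove Theorem~\ref{th:hamming} at all --- it is quoted as a known result with citations to the literature --- so there is no in-paper proof to compare against; your sphere-packing argument is the standard one those references use. It is worth observing that your route is genuinely different from the machinery the paper actually deploys: for the degenerate $d=5$ case the paper abandons sphere packing entirely (since, as you correctly note in your final paragraph, the spheres $E_aC$ can overlap when the code is impure) and instead uses the Delsarte-style linear-programming bound of Theorem~\ref{th:lp2}, choosing $f_x=(\sum_{j=0}^{e}K_j(x))^2$ and verifying that $f(0)/f_0$ dominates. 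The sphere-packing argument buys a clean, fully general proof for pure codes of any distance; the LP argument buys applicability to degenerate codes at the cost of case-by-case polynomial estimates. One small point of care in your write-up: the operators $E_a$ should be taken as representatives of the Pauli group modulo phases (so that $\binom{n}{i}(q^2-1)^i$ counts each error class once and $E_a^\dagger E_b$ is a nonscalar Pauli exactly when $a\neq b$), but this is a bookkeeping remark rather than a gap.
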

The previous quantum Hamming bound holds only for nondegenerate (pure) quantum
codes. However,   the degenerate (impure) quantum codes are particularly
interesting class of quantum codes because they can pack more quantum
information. In addition, the errors of small weights do not need active error
correction strategies.

So far no degenerate quantum code has been found that beats this bound.
Gottesman showed that impure single and double error-correcting binary quantum
codes cannot beat the quantum Hamming bound~\cite{gottesman97}. It is proved
in~\cite{ketkar06} that Hamming bound holds for  quantum stabilizer codes with
distance $d=3$.

In general, does Hamming bound exist for any distance $d$ in $((n,K,d))_q$
stabilizer codes? This has been an open question for a decade.  In this note we
prove Hamming bound for double error-correcting stabilizer codes with distance
$d=5$ and also give a sketch to prove it for general distance $d$.

\section{Quantum Hamming Bound Holds for Distance $d=5$}
There have been several approaches to prove bounds on the quantum code
parameters. In \cite{ashikhmin99} Ashikhmin and Litsyn derived many bounds for
quantum codes by extending a novel method originally introduced by
Delsarte~\cite{delsarte72} for classical codes. Using this method they proved
the binary versions of Theorems~\ref{th:singleton},\ref{th:hamming}. We use
this method to show that the Hamming bound holds for all double
error-correcting quantum codes. See \cite{ketkar06} for a similar result for
single error-correcting codes. But first we need Theorem~\ref{th:lp2} and the
Krawtchouk polynomial of degree $j$ in the variable $x$,
\begin{eqnarray}
K_j(x) = \sum_{s=0}^j (-1)^s(q^2-1)^{j-s} {x \choose s}{n-x \choose
j-s}.\end{eqnarray}

\begin{theorem}\label{th:lp2}
Let $Q$ be an $((n,K,d))_q$ stabilizer code of dimension $K>1$. Suppose that
$S$ is a nonempty subset of $\{0,\dots,d-1\}$ and $N=\{0,\dots,n\}$.  Let

\begin{eqnarray} f(x)= \sum_{i=0}^n f_i K_i(x)\end{eqnarray} be a polynomial satisfying the conditions
\begin{enumerate}
\item[i)] $f_x> 0$ for all $x$ in $S$, and $f_x\ge 0$ otherwise;
\item[ii)] $f(x)\le 0$ for all $x$ in $N\setminus S$.
\end{enumerate}
Then \begin{eqnarray} K \le \frac{1}{q^n}\max_{x\in S}
\frac{f(x)}{f_x}.\end{eqnarray}
\end{theorem}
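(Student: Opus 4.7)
The plan is to carry out the Delsarte-style linear programming argument for stabilizer codes, in the style of Ashikhmin and Litsyn~\cite{ashikhmin99}. Associate to $Q$ two weight enumerators: $A_i$ counts the elements of symplectic weight $i$ in the stabilizer $S_Q$, and $B_i$ counts those in its normalizer $N(S_Q)$. Standard facts give $A_0=B_0=1$, $0\le A_i\le B_i$, and the distance hypothesis forces $A_i=B_i$ for $0\le i\le d-1$. The quantum MacWilliams transform reads $\sum_i K_j(i)A_i=(q^n/K)B_j$; applying it twice (equivalently, using Krawtchouk orthogonality together with $\sum_iB_i=Kq^n$) yields the dual form
\begin{eqnarray*}
\sum_{i=0}^n K_j(i)\,B_i \;=\; K\,q^n\,A_j, \qquad j=0,\dots,n.
\end{eqnarray*}

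The heart of the argument is the identity obtained by expanding $f$ in the Krawtchouk basis and interchanging the order of summation,
\begin{eqnarray*}
\sum_{x=0}^n f(x)\,B_x \;=\; \sum_{i=0}^n f_i\sum_{x=0}^n K_i(x)B_x \;=\; K\,q^n\,\sum_{i=0}^n f_i A_i.
\end{eqnarray*}
I would then bound the left-hand side from above. Condition (ii) together with $B_x\ge 0$ gives $\sum_{x\in N\setminus S}f(x)B_x\le 0$, so the full sum is dominated by its restriction to $S$. Since $S\subseteq\{0,\dots,d-1\}$, the distance condition allows $B_x$ to be replaced by $A_x$ on $S$; the weighted-max inequality then produces
\begin{eqnarray*}
\sum_{x\in S} f(x)B_x \;=\; \sum_{x\in S}\frac{f(x)}{f_x}\,f_xA_x \;\le\; \max_{x\in S}\frac{f(x)}{f_x}\;\sum_{x\in S} f_x A_x \;\le\; \max_{x\in S}\frac{f(x)}{f_x}\;\sum_{i=0}^n f_i A_i,
\end{eqnarray*}
where the last step discards the nonnegative terms with index $i\notin S$ using condition (i).

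Combining the two estimates yields $Kq^n\sum_i f_iA_i \le \max_{x\in S}(f(x)/f_x)\cdot \sum_i f_iA_i$. Because $A_0=1$ and the $f_i$ are nonnegative, the factor $\sum_i f_iA_i$ is strictly positive (this is automatic when $0\in S$, where $f_0>0$), so cancelling it delivers the claimed bound $K\le q^{-n}\max_{x\in S}f(x)/f_x$. The most delicate point of the plan is pinning down the precise quantum MacWilliams normalization for the nonbinary alphabet, since once the constants $q^n/K$ and $Kq^n$ on the two sides of the transform are fixed the remaining inequality chain is essentially mechanical; a secondary subtlety is justifying the strict positivity of $\sum_i f_i A_i$ so that the final division is legitimate.
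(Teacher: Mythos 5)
The paper offers no proof of this theorem at all---it simply cites \cite{ketkar06}---and your argument is exactly the standard Delsarte/Ashikhmin--Litsyn linear-programming proof found in that reference: the identity $\sum_x f(x)B_x = Kq^n\sum_i f_iA_i$ from the MacWilliams pair, truncation of the sum to $S$ via condition (ii), replacement of $B_x$ by $A_x$ below the minimum distance, and the weighted-max inequality. The two subtleties you flag are real but harmless in every application made in this paper (where $S=\{0,\dots,4\}$ contains $0$, so $\sum_i f_iA_i\ge f_0A_0=f_0>0$, which in turn forces $\max_{x\in S}f(x)/f_x>0$---the sign fact you also need before enlarging $\sum_{x\in S}f_xA_x$ to $\sum_{i}f_iA_i$); so the proposal is correct and coincides with the proof the paper delegates to its reference.
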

\begin{proof}{}
See~\cite{ketkar06}.
\end{proof}
We demonstrate usefulness of the previous Theorem by showing that quantum
Hamming bound holds for impure codes  when $d=5$.


\begin{lemma}[Quantum Hamming Bound]\label{th:hamming2}
An $((n,K,5))_q$ stabilizer code with $K>1$ satisfies

\begin{eqnarray}K\leq q^{n}\big/\Big(n(n-1)(q^2-1)^2/2+n(q^2-1)+1\Big).\end{eqnarray}
\end{lemma}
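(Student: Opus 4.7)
The plan is to invoke Theorem~\ref{th:lp2} with a carefully chosen polynomial $f(x) = \sum_{i=0}^n f_i K_i(x)$. Since $d=5$ gives $\lfloor(d-1)/2\rfloor = 2$, the Hamming ball volume appearing in the target bound is $V := 1 + n(q^2-1) + \binom{n}{2}(q^2-1)^2$, and the assertion $K \le q^n/V$ amounts to producing an admissible $f$ with $\max_{x\in S} f(x)/f_x = q^{2n}/V$.

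I would take $S = \{0,1,2\}\subseteq\{0,\ldots,d-1\}$ and begin from the partial Krawtchouk sum $g(x) := K_0(x) + K_1(x) + K_2(x)$. A direct calculation using $K_j(0)=\binom{n}{j}(q^2-1)^j$ shows $g(0)=V$, exactly the target volume. The simplest candidate $f(x)=g(x)^2$ satisfies condition~(i) of Theorem~\ref{th:lp2}: products of Krawtchouk polynomials expand in the Krawtchouk basis with non-negative coefficients (these are the intersection numbers of the Hamming association scheme $H(n,q^2)$), so $f_i \ge 0$ for every $i$, and one checks $f_0, f_1, f_2 > 0$ because $K_0 K_j$ contributes a copy of $K_j$ for $j=0,1,2$. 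However, $g(x)^2 \ge 0$ everywhere, so condition~(ii) is violated. I would therefore pass to a modification $f(x) = \alpha\, g(x)^2 + \beta\, p(x)$, where $p(x)$ is a further non-negative combination of low-degree Krawtchouks designed so that, at the integer points $x\in\{3,4,\ldots,n\}$, the extra terms drag $f$ below zero without destroying $f_0,f_1,f_2>0$. An alternative uses the Christoffel-Darboux identity to rewrite $g(x)$ as a ratio involving $K_3(x)$, $K_2(x)$ and $x$, which pinpoints the zeros of $g$ and suggests a product form $f(x) = g(x)\cdot r(x)$ with $r$ tuned to reverse the sign of $g$ on $\{3,\ldots,n\}$.

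Once a valid $f$ is fixed, the ratios $f(0)/f_0,\ f(1)/f_1,\ f(2)/f_2$ can all be computed in closed form from the explicit values $K_j(0)$ and the Krawtchouk product rule, and should yield $\max_{x\in S}f(x)/f_x = q^{2n}/V$; Theorem~\ref{th:lp2} then gives $K \le q^n/V$ as stated. The main obstacle is condition~(ii): guaranteeing $f(x) \le 0$ for \emph{every} integer $x\in\{3,\ldots,n\}$ while simultaneously keeping all Krawtchouk coefficients non-negative and $f_0, f_1, f_2$ strictly positive. The oscillations of $K_3(x)$ and $K_4(x)$ over this range force delicate tuning of the free constants, and it is this sign control---rather than the bound computation itself---that does the real work. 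It also explains why the extension to general $d$ can only be sketched: the same template applies with $g(x) := \sum_{i=0}^{\lfloor(d-1)/2\rfloor} K_i(x)$, but the combinatorial verification of~(ii) on $\{2\lfloor(d-1)/2\rfloor+1,\ldots,n\}$ becomes progressively harder as $d$ grows.
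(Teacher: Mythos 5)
Your construction places the square on the wrong side of the Krawtchouk duality, and this single choice creates the obstacle you then cannot overcome. The paper (following the Ashikhmin--Litsyn template used in \cite{ketkar06} for $d=3$) does \emph{not} take $f(x)=g(x)^2$; it takes the \emph{coefficients} to be the squared partial sums, $f_j=\bigl(\sum_{i=0}^{2}K_i(j)\bigr)^2$, and lets $f(x)=\sum_{j}f_jK_j(x)$. With this choice condition (i) is immediate, and condition (ii) is automatic rather than delicate: writing $K_j(i)K_k(i)=\sum_m \alpha^m_{jk}K_m(i)$ with $\alpha^m_{jk}\ge 0$ and $\alpha^m_{jk}=0$ for $m>j+k$, and using $\sum_{i=0}^n K_m(i)K_i(x)=q^{2n}\delta_{mx}$, one gets $f(x)=q^{2n}\sum_{j,k\le 2}\alpha^x_{jk}=0$ for all $x\ge 5$, so there is no sign control to do on $\{5,\dots,n\}$ at all. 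Moreover the ratio comes out right only on this side of the duality: with $V=1+n(q^2-1)+\binom{n}{2}(q^2-1)^2$ one has $f(0)=q^{2n}V$ and $f_0=V^2$, hence $f(0)/f_0=q^{2n}/V$; for your candidate $f=g^2$ the roles are reversed, $f(0)=V^2$ and $f_0=V$, so $f(0)/f_0=V$ and Theorem~\ref{th:lp2} would not produce the Hamming bound even if condition (ii) could be repaired. Your proposed patches ($\alpha g^2+\beta p$, or a Christoffel--Darboux factorization) are left entirely unspecified and do not close this gap.

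A second problem: with the correct $f$ the values $f(3)=q^{2n}(6+6(q^2-2))$ and $f(4)=6q^{2n}$ are strictly positive, so $S$ must be $\{0,1,2,3,4\}$, not $\{0,1,2\}$; condition (ii) then only needs to hold on $\{5,\dots,n\}$, where $f$ vanishes. The price is that Theorem~\ref{th:lp2} yields $K\le q^{-n}\max_{x\in S}f(x)/f_x$, and one must prove that this maximum is attained at $x=0$. That comparison --- $f(0)/f_0\ge f(y)/f_y$ for $y=1,2,3,4$, which the paper establishes for $n\ge 7$ in Lemmas~\ref{lem:compare0}--\ref{lem:compare3} --- is the actual content of the proof, and it is precisely the step you dismiss as a routine closed-form computation.
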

\begin{proof}{}
 Let $f(x)=\sum_{j=0}^n f_j K_j(x)$, where $f_x=(\sum_{j=0}^{e}K_j(x))^2$,  $S = \{0,1,\ldots,4\}$ and N=\{0,1,\ldots,n\}. Calculating $f(x)$ and $f_x$
gives us
\begin{eqnarray*}
  f_0 &=& (1+n(q^2-1)+n(n-1)(q^2-1)^2/2)^2\\
  f_1 &=& \frac{1}{4} (n-1)^2 (n-2)^2 (q^2-1)^4  \\
  f_2 &=& (\frac{1}{2} (n-3) (n-2) (q^2-1)^2 -(n-2) (q^2-1))^2\\
  f_3 &=& (1-2 (n-3) (q^2-1)+\frac{1}{2} (n-4) (n-3) (q^2-1)^2)^2 \\
  f_4 &=& (3-3 (n-4) (q^2-1)+\frac{1}{2} (n-5) (n-4) (q^2-1)^2)^2\\
  \mbox{ and, }\\
    f(0) &=& q^{2 n} (1+n (q^2-1)+\frac{1}{2} (n-1) n (q^2-1)^2)\\
    f(1) &=& q^{2 n} (q^2+2 (n-1) (q^2-1)+(n-1) (q^2-2) (q^2-1)) \\
    f(2) &=& q^{2 n} (4+4 (q^2-2)+(q^2-2)^2+2 (n-2) (q^2-1)) \\
    f(3) &=& q^{2 n} (6+6 (q^2-2)) \\
    f(4) &=& 6 q^{2 n}.
\end{eqnarray*}
Clearly $f_x>0$ for all $x \in S$ . Also, $f(x) \leq 0$ for all $x \in N
\backslash S$ since the binomial coefficients for the negative values are zero.
The Hamming bound is given by

\begin{eqnarray} K \leq q^{-n} \max_{s \in S} \frac{f(x)}{f_x}\end{eqnarray}

 So, there are four different comparisons where $f(0)/f_0 \geq f(x)/f_x$, for $x=1,2,3,4$. We find
a lower bound for $n$ that holds for all values of $q$. From
Lemmas~\ref{lem:compare0},\ref{lem:compare1},\ref{lem:compare2}, and
\ref{lem:compare3}, shown below, for $n \geq 7$ it follows
 that \begin{eqnarray} \max
\{f(0)/f_0,f(1)/f_1,f(2)/f_2,f(3)/f_3,f(4)/f_4\} = f(0)/f_0
\end{eqnarray}

\end{proof}
While the above method is a general method to prove Hamming bound for impure
quantum codes,  the number of terms increases with a large minimum distance. It
becomes difficult to find the true bound using this method. However, one can
derive more consequences from Theorem~\ref{th:lp2}; see, for
instance,~\cite{ashikhmin99,ashikhmin00b,levenshtein95,mceliece77}.

\begin{lemma}\label{lem:compare0}
The inequality $f(0)/f_0 \geq f(1)/f_1$ holds for $n  \geq 6$ and $q \geq 2$.
\end{lemma}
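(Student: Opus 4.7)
The plan is to reduce the inequality $f(0)/f_0 \ge f(1)/f_1$ to an explicit polynomial inequality in $n$ and $q$, and then verify it on the stated range.

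The first step is to exploit the algebraic simplifications hidden in the given formulas. Writing $A := 1 + n(q^2-1) + \tfrac{1}{2} n(n-1)(q^2-1)^2$, one reads off $f(0) = q^{2n}A$ and $f_0 = A^2$, so $f(0)/f_0 = q^{2n}/A$. For $f(1)$, the key observation is that $2 + (q^2-2) = q^2$, which collapses the bracket:
\begin{eqnarray*}
f(1) &=& q^{2n}\bigl(q^2 + (n-1)(q^2-1)\cdot q^2\bigr) \\
     &=& q^{2n+2}\bigl(1 + (n-1)(q^2-1)\bigr).
\end{eqnarray*}
Using $f_1 = \tfrac{1}{4}(n-1)^2(n-2)^2(q^2-1)^4$ and clearing the (positive) denominators, the desired inequality becomes
\begin{eqnarray*}
(n-1)^2(n-2)^2(q^2-1)^4 &\ge& 4q^2\, A\, \bigl(1 + (n-1)(q^2-1)\bigr).
\end{eqnarray*}

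The second step is the polynomial verification. Introducing $Q := q^2 - 1 \ge 3$, both sides are polynomials of degree $4$ in $Q$ with coefficients in $n$. A short expansion gives the right-hand side's $Q^4$-coefficient as $2n(n-1)^2$, so the leading coefficient of $\mathrm{LHS}-\mathrm{RHS}$ in $Q$ is
\begin{eqnarray*}
(n-1)^2\bigl((n-2)^2 - 2n\bigr) &=& (n-1)^2(n^2 - 6n + 4),
\end{eqnarray*}
which is strictly positive for $n \ge 6$. The four lower-order coefficients in $Q$ all turn out negative, so the remaining task is to show that the positive $Q^4$ contribution outweighs their combined effect whenever $n \ge 6$ and $Q \ge 3$.

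To finish, I would (i) substitute the worst case $Q = 3$ (i.e.\ $q = 2$) and verify the resulting one-variable polynomial inequality in $n$ directly, and (ii) check that for each fixed $n \ge 6$ the difference $\mathrm{LHS}-\mathrm{RHS}$ is nondecreasing in $Q$ on $[3,\infty)$ by examining its derivative. Both sub-steps are elementary polynomial inequalities; no sophisticated technique is required once the reduction to the displayed inequality has been made. The main obstacle is the bookkeeping of the lower-order $Q$-coefficients: they grow like $n^3$ while the leading coefficient grows like $n^4$, so the inequality is asymptotically comfortable but becomes tight near the threshold, and care is needed to pin down the precise value of $n$ at which it first holds.
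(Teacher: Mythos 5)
Your algebraic reduction is correct and cleaner than the paper's: $f(0)/f_0=q^{2n}/A$ with $A=1+n(q^2-1)+\tfrac12 n(n-1)(q^2-1)^2$, the collapse $f(1)=q^{2n+2}\bigl(1+(n-1)(q^2-1)\bigr)$ is right, clearing denominators does give $(n-1)^2(n-2)^2(q^2-1)^4\ge 4q^2A\bigl(1+(n-1)(q^2-1)\bigr)$ (the same inequality the paper writes down before it starts ``approximating''), and the leading coefficient $(n-1)^2(n^2-6n+4)$ in $Q=q^2-1$ is computed correctly. The gap is precisely in the part you defer. Step (i), the check at $Q=3$, fails at $n=6$: for $(n,q)=(6,2)$ one has $f(0)/f_0=q^{12}/154$ while $f(1)/f_1=64\,q^{12}/8100=q^{12}/126.5625$, so $f(0)/f_0<f(1)/f_1$ and the statement is false at the lower end of the claimed range. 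In your own normal form, at $(n,Q)=(6,3)$ the positive term is $(n-1)^2(n^2-6n+4)Q^4=100\cdot 81=8100$, while the $Q^3$ term alone contributes $-2n(n-1)(n+2)Q^3=-480\cdot 27=-12960$, so $\mathrm{LHS}-\mathrm{RHS}$ is already negative before the remaining negative coefficients are counted; your step (ii) breaks at the same point, since the $Q$-derivative of $\mathrm{LHS}-\mathrm{RHS}$ is $4\cdot 100\cdot 27-3\cdot 480\cdot 9-\cdots<0$ there. Your plan ``verify at $Q=3$ and then argue monotonicity'' is the right strategy, but carried out honestly it refutes the claim for $n=6$ rather than confirming it.

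The correct uniform threshold over all $q\ge 2$ is $n\ge 7$ (the failure is only at $q=2$; for $q\ge 3$ the case $n=6$ is fine, e.g.\ $1/1009>369/409600$), and for $n\ge 7$ both of your sub-steps do go through. This does not affect Lemma~\ref{th:hamming2}, which only uses these comparisons for $n\ge 7$, but it means the lemma as stated cannot be proved. For comparison, the paper's own argument replaces $(n-1)$ by $(n-2)$, $(n-2)$ by $n$, and $1/(q^2-1)$ by $1$ in a chain of ``approximations'' whose directions are not tracked, and its final displayed inequality $(n-2)^4\ge 8(1+n+n^2/2)(n-1)$ is likewise false at $n=6$ (it reads $256\ge 1000$); your exact polynomial approach is the rigorous replacement for that argument, but it lands at $n\ge 7$, not $n\ge 6$.
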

\begin{proof}{}
Let $f(0)/f_0 \geq f(1)/f_1$ then\\
\begin{eqnarray*}   
\frac{1}{1+n(q^2-1)+n(n-1)(q^2-1)^2/2} &\geq&
\frac{4q^2((n-1)(q^2-1)+1)}{(n-1)^2(n-2)^2(q^2-1)^4} \nonumber
\end{eqnarray*}
\begin{eqnarray*}
(n-1)^2(n-2)^2(q^2-1)^4   &\geq& (1+n(q^2-1)\\
&+& \frac{n(n-1)}{2}(q^2-1)^2)(4q^2((n-1)(q^2-1)+1)) \nonumber
\end{eqnarray*}
in the left side $(n-1)$ approximates to $(n-2)$. Also, in the right side
$(n-2)$ and $(n-1)$ approximate to $(n)$. So,
\begin{eqnarray*}
(n-2)^4(q^2-1)^4 &\geq& 4(1+n(q^2-1)+\frac{n^2}{2}(q^2-1)^2)(q^2(q^2-1)(n-1)+1)
\nonumber
\end{eqnarray*}
divide both sides by $(q^2-1)^2(q^2-1)^2$ and approximate $\frac{1}{q^2-1} \leq
1$, we find that
\begin{eqnarray*}
(n-2)^4 \geq 8(1+n+\frac{n^2}{2})(n-1) \nonumber
\end{eqnarray*}
by approximating both sides, the final result is   $(n-2) \geq 4 $ or
\begin{eqnarray}
 n \geq 6 \nonumber
\end{eqnarray}
\end{proof}
\begin{lemma}\label{lem:compare1}
The inequality $f(0)/f_0 \geq f(2)/f_2$ holds for  $n \geq  7 $  and $q \geq
2$.
\end{lemma}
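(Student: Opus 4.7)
The plan is to mirror the template of Lemma~\ref{lem:compare0}: write out $f(0)/f_0 \geq f(2)/f_2$ using the expressions for $f_0,f_2,f(0),f(2)$ listed in the proof of Lemma~\ref{th:hamming2}, cancel the common factor $q^{2n}$, cross-multiply to clear denominators, and reduce to a polynomial inequality in $n$ alone by the same style of approximations used there.

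Two algebraic simplifications make this tractable. First, the $n$-independent part of $f(2)/q^{2n}$ collapses by the identity $4+4(q^2-2)+(q^2-2)^2 = q^4$, so $f(2)=q^{2n}\bigl(q^4+2(n-2)(q^2-1)\bigr)$. Second, $f_2$ factors as $(n-2)^2(q^2-1)^2\bigl(\tfrac12(n-3)(q^2-1)-1\bigr)^2$. With these, the target inequality becomes
\begin{eqnarray*}
(n-2)^2(q^2-1)^2\Bigl(\tfrac12(n-3)(q^2-1)-1\Bigr)^2 \geq \Bigl(1+n(q^2-1)+\tfrac12 n(n-1)(q^2-1)^2\Bigr)\bigl(q^4+2(n-2)(q^2-1)\bigr).
\end{eqnarray*}

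Next, following the approximation step of Lemma~\ref{lem:compare0}, on the left I would lower-bound by dropping the $-1$ inside the squared factor (valid as soon as $(n-3)(q^2-1)\geq 2$), and on the right I would upper-bound by replacing $n-1$ and $n-2$ by $n$. Dividing both sides by $(q^2-1)^4$ and exploiting $q^2/(q^2-1)\leq 2$ and $1/(q^2-1)\leq 1$ for $q\geq 2$ then eliminates $q$ altogether, leaving a one-variable polynomial comparison whose leading term on the left is of order $\tfrac14(n-2)^2(n-3)^2$ and on the right is cubic in $n$ plus lower-order terms. A direct check then shows that the surviving inequality first becomes valid at $n=7$.

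The main obstacle is the bookkeeping: because $d=5$ produces quartic expressions in both $n$ and $(q^2-1)$, there is more room than in Lemma~\ref{lem:compare0} to approximate in the wrong direction, so I must carefully record at each step which substitutions are lower bounds (on the left) and which are upper bounds (on the right) so that the simplified inequality provably implies the original. No new idea beyond those already used for Lemma~\ref{lem:compare0} is needed; once the cleaned-up polynomial in $n$ is in hand, verifying the threshold $n\geq 7$ is a routine integer check that will subsume the $n\geq 6$ of the previous lemma.
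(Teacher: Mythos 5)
Your proposal follows essentially the same route as the paper's proof: cross-multiply, use the identity $4+4(q^2-2)+(q^2-2)^2=q^4$ to simplify $f(2)$, factor $f_2$, approximate the left side downward and the right side upward, divide out powers of $(q^2-1)$ using $q\geq 2$, and check the resulting polynomial inequality in $n$ at the threshold $n=7$. Your explicit bookkeeping of which substitutions are lower bounds and which are upper bounds is in fact more careful than the paper's rather loose "approximations," but it is a refinement of the same argument, not a different one.
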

\begin{proof}{}
Let \begin{eqnarray}   
\frac{q^{2n}}{1+n(q^2-1)+n(n-1)(q^2-1)^2/2} \geq \frac{q^{2n}(q^4+2(n-2)(q^2-1)
)}{(-(n-2)(q^2-1)+(n-3)(n-2)(q^2-1)^2/2)^2} \nonumber
\end{eqnarray}
by simplifying both sides\\
\begin{eqnarray}
(-(n-2)(q^2-1)+(n-3)(n-2)(q^2-1)^2/2)^2 &\geq &\nonumber \\
(q^4+2(n-2)(q^2-1))(1+n(q^2-1)+n(n-1)(q^2-1)^2/2) \nonumber
\end{eqnarray}
Simplifying  L.H.S, $(n-2)$ to $(n-3)$ then
\begin{eqnarray}
(q^2-1)^4((n-3)^2/2 -(n-2))^2 &\geq& \nonumber
\\(q^4+2(n-2)(q^2-1))(1+n(q^2-1)+n(n-1)(q^2-1)^2/2) \nonumber
\end{eqnarray}
by simplifying both sides
\begin{eqnarray}
((n-3)^2/2 -(n-2))^2 &\geq& (\frac{q^2}{(q^2-1)^2} + \frac{2(n-2)}{q^2-1} )(1+n+n(n-1)/2) \nonumber\\
((n-3)^2/2 -(n-2))^2 & \geq& 2(2n+1)(n^2+2n+2) \nonumber\\
(n-3)^2((n-3)/2 -1)^2 &\geq& 2(2n+1)((n+1)^2+1) \nonumber\\
(n-5)^2/4 &\geq& 2(2n+1)\nonumber \\
n &\geq& 7 \nonumber
\end{eqnarray}
\end{proof}

\begin{lemma}\label{lem:compare2}
The inequality $f(0)/f_0 \geq f(3)/f_3$ holds for  $n \geq  7$  and $q \geq 2$.
\end{lemma}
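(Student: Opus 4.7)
The plan is to substitute the explicit expressions for $f(0)$, $f_0$, $f(3)$, and $f_3$ from the proof of Lemma~\ref{th:hamming2} into the inequality $f(0)/f_0 \ge f(3)/f_3$, cancel the common factor $q^{2n}$, and cross-multiply. Both denominators are positive for $n\ge 6$ (which I would verify along the way), so the inequality reduces to
\[
\left(1-2(n-3)(q^2-1)+\tfrac{1}{2}(n-4)(n-3)(q^2-1)^2\right)^2 \ge 6(q^2-1)\left(1+n(q^2-1)+\tfrac{1}{2}n(n-1)(q^2-1)^2\right).
\]

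Following the approximation strategy of Lemmas~\ref{lem:compare0} and~\ref{lem:compare1}, I would first show that for $n\ge 6$ and $q\ge 2$ the quadratic term $\tfrac{1}{2}(n-4)(n-3)(q^2-1)^2$ on the left dominates the subtracted linear term $2(n-3)(q^2-1)$ and the constant $1$, so the whole bracket is positive and bounded below by a constant multiple of the quadratic term. Squaring then gives a lower bound of order $(n-4)^2(n-3)^2(q^2-1)^4$. On the right side, I would use the crude bound $1+n(q^2-1)+\tfrac{1}{2}n(n-1)(q^2-1)^2 \le n^2(q^2-1)^2$ (valid up to a small multiplicative constant for moderate $n$), producing an upper bound of order $n^2(q^2-1)^3$.

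Dividing both sides by $(q^2-1)^3$, which is legitimate since $q\ge 2$, reduces the problem to a polynomial inequality essentially of the form $(n-4)^2(n-3)^2(q^2-1) \ge C\,n^2$ for some absolute constant $C$. Using $q^2-1\ge 3$, the worst case is $q=2$, and one is left with a purely numerical inequality in $n$; replacing $(n-4)$ and $(n-3)$ by $(n-5)$ on the left (as done in the earlier lemmas) and taking square roots, checking small cases pins down the threshold at $n=7$.

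The main obstacle is bookkeeping. The bracket being squared on the left has mixed signs, so before taking the square one must verify carefully that the quadratic term dominates; otherwise a lower bound by its leading term is invalid. Likewise, every coarse approximation has to be oriented so that the direction of the inequality is preserved. A secondary subtlety is uniformity in $q$: by isolating the factor $(q^2-1)$ after cancellation and using $q^2-1\ge 3$, the case $q=2$ is the binding one and the larger values of $q$ follow \emph{a fortiori}.
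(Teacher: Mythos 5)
Your plan is the same route the paper takes: substitute the closed forms of $f(0),f_0,f(3),f_3$, cancel $q^{2n}$, cross-multiply to the displayed inequality $\bigl(1-2(n-3)(q^2-1)+\tfrac{1}{2}(n-4)(n-3)(q^2-1)^2\bigr)^2 \ge 6(q^2-1)\bigl(1+n(q^2-1)+\tfrac{1}{2}n(n-1)(q^2-1)^2\bigr)$, and then argue by coarse approximations with $q=2$ as the binding case. That reduction is correct and matches the paper's.

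The genuine gap is in the final step, and it is fatal to the statement, not merely to the bookkeeping: the small-case check you defer to the end actually refutes the claim. At $n=7$, $q=2$ the bracket is $1-2\cdot 4\cdot 3+\tfrac{1}{2}\cdot 3\cdot 4\cdot 9=31$, so the left-hand side is $961$, while the right-hand side is $6\cdot 3\cdot(1+21+189)=3798$; equivalently $f(0)/f_0=q^{2n}/211$ but $f(3)/f_3=18\,q^{2n}/961$, which is roughly four times larger. The inequality also fails at $n=8$, $q=2$ (where $61^2=3721<4986$) and first holds at $n=9$ when $q=2$ (for $q\ge 3$ it does hold already at $n=7$). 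So no amount of care with the approximations can rescue the threshold $n\ge 7$; the lemma as stated is false, and the paper's own proof reaches $n\ge 7$ only because its replacements (e.g.\ keeping just the quadratic term of the bracket) are not direction-controlled --- note that at $n=7$, $q=2$ even the full quadratic term squared, $54^2=2916$, is below the right-hand side $3798$, so your proposed lower bound of order $(n-4)^2(n-3)^2(q^2-1)^4$ cannot hold with any constant close to $1$ at the boundary. The dominance claim for the bracket also needs care for the same reason: at $n=7$, $q=2$ the bracket is positive but only about $0.57$ of its quadratic term. The statement (and your strategy) can be repaired by raising the threshold to $n\ge 9$, or by excluding $q=2$.
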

\begin{proof}{}
Let
\begin{eqnarray}
\frac{q^{2n}}{1+n(q^2-1)+n(n-1)(q^2-1)^2/2} \geq
\frac{6q^{2n}(q^2-1)}{(1-2(n-3)(q^2-1)+(n-3)(n-4)(q^2-1)^2/2)^2} \nonumber
\end{eqnarray}
by simplification
\begin{eqnarray}
(1-2(n-3)(q^2-1)+(n-3)(n-4)(q^2-1)^2/2)^2 &\geq& \nonumber \\ 6(q^2-1) (1+n(q^2-1)+n(n-1)(q^2-1)^2/2) \nonumber\\
 \frac{((n-4)^2(q^2-1)^2-4(n-4)(q^2-1)+2)^2}{4} \nonumber &\geq& \\
 6(q^2-1)(1+n(q^2-1)+n(n-1)(q^2-1)^2/2) \nonumber
\end{eqnarray}
by approximation to $(q^2-1)$
\begin{eqnarray}
\frac{(q^2-1)^4}{4}((n-5)^4 &\geq& 3(q^2-1)^3(2+2n+n^2) \nonumber\\
(n-5)^4 &\geq& 4( 2+2n+n^2 ) \nonumber\\
(n-5)^2 &\geq& 2 \nonumber\\
n &\geq& 7 \nonumber
\end{eqnarray}
\end{proof}

\begin{lemma}\label{lem:compare3}
The inequality $f(0)/f_0 \geq f(4)/f_4$ holds for   $n \geq 7 $ and $q \geq 2$.
\end{lemma}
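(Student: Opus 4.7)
The plan is to mirror the strategy used in Lemmas~\ref{lem:compare0}--\ref{lem:compare2}: substitute the explicit expressions for $f(0)$, $f_0$, $f(4)$, and $f_4$ into the inequality $f(0)/f_0 \geq f(4)/f_4$, cross-multiply to clear denominators, and then repeatedly approximate $(n-k)$ by a common representative while using $1/(q^2-1)\leq 1$ (valid for $q\geq 2$) to reduce the claim to a simple polynomial inequality in $n$ alone.

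Using the values tabulated in the proof of Lemma~\ref{th:hamming2}, one has $f(0)/f_0 = q^{2n}/\bigl(1+n(q^2-1)+n(n-1)(q^2-1)^2/2\bigr)$ and $f(4)/f_4 = 6q^{2n}/\bigl(3-3(n-4)(q^2-1)+\tfrac{1}{2}(n-5)(n-4)(q^2-1)^2\bigr)^2$. After cross-multiplying and cancelling the common factor $q^{2n}$, the claim reduces to showing $\bigl(3-3(n-4)(q^2-1)+\tfrac{1}{2}(n-5)(n-4)(q^2-1)^2\bigr)^2 \geq 6\bigl(1+n(q^2-1)+\tfrac{n(n-1)}{2}(q^2-1)^2\bigr)$. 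For moderate $n$ the LHS is dominated by its leading term $\tfrac{1}{4}(n-5)^2(n-4)^2(q^2-1)^4$, while the RHS is of order $3n(n-1)(q^2-1)^2$. Approximating $(n-4)\approx(n-5)$ on the LHS and $n(n-1)\approx n^2$ on the RHS, dividing both sides by $(q^2-1)^2$, and using $1/(q^2-1)\leq 1$, the inequality collapses to one of the form $(n-5)^4 \geq 12\,n^2$, which is comfortably satisfied for $n\geq 7$.

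The main obstacle, as in the previous three lemmas, is carrying out the approximations without being forced to a larger threshold on $n$. At $n=7$ the constant term $3$ and the linear term $-3(n-4)(q^2-1)$ on the LHS are not entirely negligible compared to the leading quartic-in-$(q^2-1)$ piece before squaring, so one must either argue that these cross terms only help (the full square is bounded below by its dominant monomial up to controllable lower-order corrections) or verify the threshold $n=7$ directly at the worst case $q=2$. Either route is purely algebraic and parallels Lemma~\ref{lem:compare2} almost verbatim, so I would expect no conceptual difficulty beyond careful bookkeeping of the approximations.
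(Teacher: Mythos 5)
Your overall strategy mirrors the paper's (cross-multiply, then approximate term by term), but the reduction you arrive at does not prove the claim, and the step where you discard the lower-order terms of the square on the left is exactly where the argument breaks. First, the inequality $(n-5)^4 \geq 12\,n^2$ is false at the claimed threshold: at $n=7$ it reads $16 \geq 588$, and it only begins to hold around $n \geq 12$. Second, and more seriously, the premise that the left-hand square is dominated by its leading monomial $\tfrac{1}{4}(n-5)^2(n-4)^2(q^2-1)^4$ fails badly for small $n$: at $n=7$, $q=2$ one has $3-3(n-4)(q^2-1)+\tfrac{1}{2}(n-5)(n-4)(q^2-1)^2 = 3-27+27 = 3$, so $f_4 = 9$, whereas the ``dominant monomial'' alone would give $729$. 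The negative linear term cancels the quadratic term almost exactly, so the cross terms do not ``only help.''

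Your proposed fallback --- verify the threshold directly at $n=7$, $q=2$ --- would in fact refute the statement rather than confirm it: there $f(0)/f_0 = q^{2n}/(1+21+189) = q^{2n}/211$, while $f(4)/f_4 = 6q^{2n}/9 = \tfrac{2}{3}\,q^{2n}$, so $f(0)/f_0 < f(4)/f_4$. The paper's own proof has the same defect: its final line $(9/4)(n-5)^4 \geq 6(1+n+n^2)$ also fails at $n=7$ (it reads $36 \geq 342$), and its chain of ``approximations'' (e.g.\ replacing $-3(n-4)(q^2-1)$ by $+(n-5)^2(q^2-1)$) is not a chain of valid implications. So the gap is not mere bookkeeping: no amount of care in the approximations will close it at $n=7$, and a correct version of this comparison requires either a larger threshold on $n$ or a different choice of test polynomial.
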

\begin{proof}{}
Let
\begin{eqnarray}   
\frac{q^{2n}}{1+n(q^2-1)+n(n-1)(q^2-1)^2/2} \nonumber \\ \geq
\frac{6q^{2n}}{(3-3(n-4)(q^2-1)+(n-4)(n-5)(q^2-1)^2 /2)^2} \nonumber
\end{eqnarray}
divide by $q^{2n}$ and simplifying
\begin{eqnarray}
 (3-3(n-4)(q^2-1)+(n-4)(n-5)(q^2-1)^2 /2)^2 &\geq& \nonumber \\
 6(1+n(q^2-1)+n(n-1)(q^2-1)^2/2) \nonumber
\end{eqnarray}
then by approximating $ (n-4)$ to $(n-5)$ in L.H.S and $(n-1)$ to $4n$ in R.H.S, we can find that\\
\begin{eqnarray}   
(-3(n-4)(q^2-1)+(n-5)^2(q^2-1)^2 /2)^2 &\geq& 6(1+n(q^2-1)+n^2(q^2-1)^2/2)
\nonumber
\end{eqnarray}
\begin{eqnarray}   
((n-5)^2(q^2-1)+(n-5)^2(q^2-1)^2 /2)^2 &\geq& 6(1+n(q^2-1)+n^2(q^2-1)^2/2)
\nonumber
\end{eqnarray}
\begin{eqnarray}   
(q^2-1)^4((n-5)^2+(n-5)^2 /2)^2 &\geq& 6(1+n(q^2-1)+n^2(q^2-1)^2/2) \nonumber
\end{eqnarray}
dividing both sides by $(q^2-1)^4$ and simplifying
\begin{eqnarray}   
(9/4)(n-5)^4 &\geq& \frac{6(1+n(q^2-1)+n^2(q^2-1)^2/2)}{(q^2-1)^4} \nonumber
\end{eqnarray}
\begin{eqnarray}   
(9/4)(n-5)^4 &\geq& 6(1+n+n^2)\nonumber\\
 n  &\geq&  7 \nonumber
\end{eqnarray}
\end{proof}

Since it is not known if the quantum Hamming bound holds for degenerate
nonbinary quantum codes, it would be interesting to find degenerate quantum
codes that either meet or beat the quantum Hamming bound. This is obviously a
challenging open research problem.

\section{Maximal Length of MDS Codes}
 In this section we derive some results on the maximal length of
 single and double error-correcting quantum MDS codes. These
bounds hold for all additive quantum codes.

\subsection{Maximal Length Single Error-correcting MDS Codes}
\begin{lemma}\label{max1}
The maximal length of single error correcting additive quantum MDS codes is
given by $q^2+1$.
\end{lemma}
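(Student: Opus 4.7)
The plan is to obtain the bound by combining the quantum Singleton bound at equality (the definition of MDS) with the quantum Hamming bound specialized to $d=3$. The paper already notes that the Hamming bound is known to hold unconditionally for distance $d=3$ (pure or impure) by the result attributed to Ketkar et al., so no purity hypothesis needs to be checked separately.

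Concretely, I would argue as follows. Let $Q$ be an $((n,K,3))_q$ additive quantum MDS code. Since $Q$ is MDS, Theorem~\ref{th:singleton} holds with equality at $d=3$, hence
\begin{equation*}
K \;=\; q^{n-2d+2} \;=\; q^{n-4}.
\end{equation*}
Now apply Theorem~\ref{th:hamming} with $d=3$ (so $\lfloor (d-1)/2\rfloor = 1$), using that the Hamming bound is valid for all distance-$3$ stabilizer codes:
\begin{equation*}
1 + n(q^2-1) \;\le\; q^n/K \;=\; q^4.
\end{equation*}
Rearranging,
\begin{equation*}
n(q^2-1) \;\le\; q^4 - 1 \;=\; (q^2-1)(q^2+1),
\end{equation*}
so $n \le q^2+1$, which is the desired upper bound.

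To complete the claim that $q^2+1$ is actually the \emph{maximal} length and not just an upper bound, I would finish by invoking the known construction of a single-error-correcting quantum MDS code of length exactly $q^2+1$ (for instance, a stabilizer code with parameters $[[q^2+1,q^2-3,3]]_q$ obtainable from a Hermitian self-orthogonal extended Reed--Solomon or similar classical MDS code), which matches the bound and thus shows it is attained.

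The argument is essentially a one-line substitution, so there is no genuine technical obstacle; the only thing one must be careful about is the \emph{purity} issue, since Theorem~\ref{th:hamming} as stated is for pure codes. The step that makes the proof go through without restricting to pure MDS codes is the (cited) fact that for $d=3$ the Hamming bound holds for degenerate codes as well; without that input one would only get the bound for pure MDS codes and would then need an additional argument (e.g., that quantum MDS codes are automatically pure) to conclude in general.
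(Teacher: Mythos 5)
Your proof is correct and follows essentially the same route as the paper: it combines the quantum Singleton bound at equality with the $d=3$ quantum Hamming bound (valid for degenerate codes by the cited result of Ketkar et al.) and the identical algebraic step $q^4-1=(q^2-1)(q^2+1)$ to get $n\le q^2+1$. You go slightly further than the paper by also addressing attainability via a length-$(q^2+1)$ construction and by explicitly flagging the purity issue, both of which the paper's proof leaves implicit.
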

\begin{proof}
We know that the quantum Hamming bound holds for $K>1$ for $d=3$, so
\begin{eqnarray}
K &\leq & \frac{q^n}{1+n(q^2-1)}
\end{eqnarray}
If the Hamming bound is tighter than the Singleton bound for any $((n,K,3))_q$
quantum code, then it means that MDS codes cannot exist for that set of $n,K$.
This occurs when
\begin{eqnarray}
q^{n-2d+2}=q^{n-4 } &\geq& \frac{q^n}{1+n(q^2-1)}\nonumber\\
1+n(q^2-1) &\geq& q^4\nonumber\\
n&\geq& q^2+1
\end{eqnarray}
Thus there exist no single error correcting quantum MDS codes for $n>q^2+1$.
\end{proof}
\subsection{Upper Bound on the Maximal Length of Double Error-correcting MDS
Codes}
\begin{lemma}\label{max2}
 The maximal length of double error-correcting quantum MDS
codes is upper bounded by:\\
\begin{eqnarray}
n \leq \frac{(q^2-3)+ \sqrt{( (q^2 -3) +8(q^8 -1))}}{2(q^2-1)}
\end{eqnarray}
\end{lemma}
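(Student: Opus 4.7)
The plan is to repeat the template from the proof of Lemma~\ref{max1}, but with the Hamming bound for $d=5$ established in Lemma~\ref{th:hamming2} replacing the $d=3$ version. A double error-correcting quantum MDS code meets the Singleton bound with $d=5$ and hence has $K=q^{n-8}$. Whenever the Hamming bound at length $n$ is strictly tighter than $q^{n-8}$, no $((n,K,5))_q$ MDS code of that length can exist, so the largest admissible $n$ is the largest one for which the Hamming bound is still at least as large as the Singleton value.

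Concretely, I would first write the compatibility condition
\begin{equation*}
q^{n-8} \;\leq\; \frac{q^n}{\tfrac{1}{2} n(n-1)(q^2-1)^2 + n(q^2-1) + 1},
\end{equation*}
which after cancelling $q^n$ is equivalent to
\begin{equation*}
\tfrac{1}{2} n(n-1)(q^2-1)^2 + n(q^2-1) + 1 \;\leq\; q^8.
\end{equation*}
Expanding and collecting powers of $n$ then produces the quadratic inequality
\begin{equation*}
(q^2-1)^2\, n^2 \;-\; (q^2-1)(q^2-3)\, n \;-\; 2(q^8-1) \;\leq\; 0.
\end{equation*}

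Solving by the quadratic formula gives the upper threshold directly; a common factor of $(q^2-1)$ pulls out of both the linear coefficient and the discriminant, yielding the expression in the lemma as the binding root. I do not anticipate any serious obstacles: the derivation is a straight reduction followed by the quadratic formula, and the positive leading coefficient $(q^2-1)^2$ guarantees that the upper root is the relevant one. The one point worth checking is that the hypothesis $n\geq 7$ required by Lemma~\ref{th:hamming2} actually holds throughout the relevant range — but the upper root of the quadratic already exceeds $7$ for every $q\geq 2$, so the $d=5$ Hamming bound is in force wherever the inequality binds and the argument closes.
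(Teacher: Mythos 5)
Your proposal is correct and follows essentially the same route as the paper: substitute the $d=5$ Hamming bound into the Singleton value $K=q^{n-8}$, reduce to the quadratic $(q^2-1)^2n^2-(q^2-1)(q^2-3)n-2(q^8-1)$, and read off the larger root after factoring $(q^2-1)$ out of the discriminant. Your phrasing via the complementary inequality ($\leq 0$, so $n$ lies between the roots) is in fact cleaner than the paper's, which states the root conditions for its $\geq 0$ quadratic in a garbled way, and your check that the threshold exceeds $7$ (so Lemma~\ref{th:hamming2} applies) is a detail the paper omits.
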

\begin{proof}
It is known that the Hamming bound for $d = 5$ is given by:
\begin{eqnarray}   
K &\leq & \frac{q^n}{1+n(q^2-1)+n(n-1)(q^2-1)^2/2}
\end{eqnarray}
If the Hamming bound is tighter than the Singleton bound for any $((n,K,5))_q$
quantum code, then it means that MDS codes cannot exist for that set of code
parameters. By simple computation, we find that
\begin{eqnarray*}
q^{n-2d+2}=q^{n-8 } &\geq& \frac{q^n}{1+n(q^2-1)+ \frac{n(n-1)}{2}(q^2-1)^2}
\end{eqnarray*}
\begin{eqnarray}
n^2(q^2-1)^2 - n(q^2-1)(q^2-3)-2(q^8-1) \geq  0
\end{eqnarray}
So, the quadratic equation of $n$ has two real solutions. This inequality holds
for
\begin{eqnarray}
n \geq \frac{(q^2-3)-\sqrt{(q^2-3)^2+8(q^8-1)}}{2(q^2-1)}\\
n \leq \frac{(q^2-3)+\sqrt{(q^2-3)^2+8(q^8-1)}}{2(q^2-1)}
\end{eqnarray}

Only the positive solution  for $n$ is valid. So, the maximal length of double error-correcting MDS code is upper bounded by \\
\begin{eqnarray}
n \leq \frac{(q^2-3)+\sqrt{(q^2-3)^2+8(q^8-1)}}{2(q^2-1)}
\end{eqnarray}

\end{proof}

\def\cprime{$'$}


\newpage
\section{Appendix}

\section*{An  Approach (Sketch) to Prove Hamming Bound for Degenerate Nonbinary Stabilizer Codes with Minimum Distance $d$}
One way to prove the quantum Hamming bound for impure nonbinary stabilizer
codes with $d \leq (n-k+2)/2$ is to expand $f(x)/f_x$ in terms of Krawtchouk
polynomials. Let $f(x) = \sum_{j=0}^n f_j K_j(x)$ and $f_x =
(\sum_{i=0}^{e}K_i(x))^2 $. The Krawtchouk polynomial of degree e in the
variables x and q is given by
\begin{eqnarray}
K_{e}(q,x) =  \sum_{j=0}^e (-1)^{j} (q^{2} -1)^{e-j} \begin{pmatrix}   x \\   j
\\ \end{pmatrix} \begin{pmatrix}   n-x \\   e-j \\ \end{pmatrix}
\end{eqnarray}

\begin{theorem}\label{maxbound}
Let $Q$ be an $((n,K,d))_q$ stabilizer code of dimension $K>1$. Suppose that
$S$ is a nonempty subset of $\{0,1,...,d-1\}$ and $N =\{0,1,...,n\}$. Let
\[f(x) = \sum_{i=0}^{n} f_i K_i(x) \] be a polynomial satisfying the
conditions:
\begin{itemize}
    \item[i)] $f_x > 0$ for all $x \in S$, and $f_x \geq 0$ otherwise;
\item[ii)] $f(x) \leq 0$ for all $x \in N\backslash S$.
\end{itemize}
Then \[ K \leq \frac{1}{q^{n}} \max_{x \in S} \frac{f(x)}{f_x}. \]
\end{theorem}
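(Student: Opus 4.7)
The plan is to apply the classical Delsarte linear-programming argument in the symplectic setting of stabilizer codes, in the spirit of Ashikhmin--Litsyn~\cite{ashikhmin99}. First I would introduce the stabilizer and normalizer weight distributions $\{A_i\}$ and $\{B_i\}$ of the code $Q$, counting elements of symplectic weight $i$ in the stabilizer $S$ (so $\sum_i A_i = q^n/K$) and in its symplectic dual $N(S)$ (so $\sum_i B_i = q^n K$). Three standard facts will drive everything: (a) $A_0 = B_0 = 1$ and $A_i, B_i \geq 0$ with $B_i \geq A_i$ for all $i$; (b) by the definition of minimum distance, $A_i = B_i$ for every $i \in \{0,\ldots,d-1\}$, hence in particular for every $i \in S$; and (c) the symplectic MacWilliams identity $\sum_{i=0}^n B_i K_j(i) = q^n K \cdot A_j$.

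The main step is to evaluate the single quantity $\sum_{x=0}^n B_x f(x)$ in two ways. Substituting the Krawtchouk expansion of $f$, swapping summations, and using (c) gives
\[ \sum_{x=0}^n B_x f(x) = \sum_{i=0}^n f_i \sum_{x=0}^n B_x K_i(x) = q^n K \sum_{i=0}^n f_i A_i. \]
On the left, condition~(ii) together with $B_x \geq 0$ makes every term with $x \in N \setminus S$ contribute non-positively, so $\sum_{x \in S} B_x f(x) \geq \sum_x B_x f(x)$. On the right, condition~(i) and $A_i \geq 0$ make every summand non-negative, so $\sum_i f_i A_i \geq \sum_{i \in S} f_i A_i$, and by fact (b) this last sum equals $\sum_{i \in S} f_i B_i$. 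Chaining the inequalities yields the clean statement
\[ \sum_{x \in S} B_x \bigl( f(x) - q^n K f_x \bigr) \geq 0. \]

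To finish, $B_0 = 1 > 0$ guarantees that $\{x \in S : B_x > 0\}$ is non-empty, and a non-negative combination being $\geq 0$ forces at least one summand to be non-negative. Hence some $x^* \in S$ with $B_{x^*} > 0$ satisfies $f(x^*) \geq q^n K f_{x^*}$; dividing by $f_{x^*} > 0$ (condition~(i)) gives $K \leq f(x^*)/(q^n f_{x^*}) \leq q^{-n} \max_{x \in S} f(x)/f_x$, which is the claim. The only subtle point---the main obstacle if one is not careful---is choosing the correct direction of MacWilliams in (c): one must expand the normalizer enumerator $B$ so that the proportionality constant on the right carries $K$ rather than $1/K$. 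The other identity $\sum_i A_i K_j(i) = (q^n/K) B_j$ reverses every inequality above and produces only a useless lower bound on $K$. Once that convention is fixed, the rest is a routine exchange of summation order plus two sign-based drops of terms.
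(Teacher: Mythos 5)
Your argument is correct in substance, and it is the same argument the paper itself relies on: the paper offers no proof of this theorem (both here and as Theorem~\ref{th:lp2} it simply defers to \cite{ketkar06}), and the proof in that reference, adapted from \cite{ashikhmin99}, is exactly your Delsarte-style double evaluation of $\sum_{x}B_xf(x)$ via the symplectic MacWilliams identity, with the normalization $\sum_{i}B_iK_j(i)=q^nKA_j$ chosen so that $K$ lands in the numerator, followed by the two sign-based truncations using $f(x)\le 0$ off $S$ and $f_i\ge 0$, and the identification $A_i=B_i$ for $i\le d-1$.

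One detail needs patching. Your final step asserts that $\{x\in S:B_x>0\}$ is nonempty because $B_0=1$, but that presupposes $0\in S$, which the hypothesis (``$S$ a nonempty subset of $\{0,\dots,d-1\}$'') does not grant; if every $B_x$ with $x\in S$ vanished (for a pure code $B_1=\dots=B_{d-1}=0$, so $S=\{1\}$ is a genuine worry), your inequality $\sum_{x\in S}B_x\bigl(f(x)-q^nKf_x\bigr)\ge 0$ would be the vacuous $0\ge 0$ and the conclusion would not follow. The fix is short: if $0\notin S$ then condition~(ii) requires $f(0)\le 0$, yet $f(0)=\sum_{j}f_jK_j(0)=\sum_j f_j\binom{n}{j}(q^2-1)^j>0$ because all $f_j\ge 0$ and $f_x>0$ on the nonempty set $S$; hence no admissible $f$ exists and the claim holds vacuously. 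So one may assume $0\in S$ without loss of generality, after which your argument is complete (and in every application in this paper $S=\{0,1,\dots,d-1\}$ anyway).
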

Notice that $ f(x) = \sum_{i=0}^{n} f_i K_i(x)   $ can be written as $f_i  =  q^{-2n} \sum_{x=0}^{n}f(x) K_x(i)$.\\


\begin{lemma}[Sketch] \label{hamming-main23} Let $Q$ be an $((n,K,d))_q$ stabilizer code of dimension $k\geq 1$. Suppose that $S$ is a non-empty subset of
\{0,1,2,....,2e\}, where $e = \lfloor\frac{d-1}{2}\rfloor$.
The Hamming bound is given by $K \leq q^{-n} \max_{x \in S} \frac{f(x)}{f_x}$ equals to  \\
\begin{eqnarray}K \leq \frac{q^{n}}{\sum_{i=0}^{e}\left(%
\begin{array}{c}
  n\\
  i \\
\end{array}%
\right)(q^{2}-1)^{i}}
\end{eqnarray}
If and only if $f(0)/f_0$ is the maximum value for $d\geq 3$ and $n \geq n_0$.
\end{lemma}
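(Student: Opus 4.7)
The plan is to generalize Lemma~\ref{th:hamming2} in a modular way. With the specific choice $f_j = \bigl(\sum_{i=0}^{e} K_i(j)\bigr)^{2}$ and $f(x) = \sum_{j=0}^{n} f_j K_j(x)$, the bound produced by Theorem~\ref{maxbound} decomposes into three essentially independent tasks: first, verifying the hypotheses of Theorem~\ref{maxbound}; second, showing that $q^{-n} f(0)/f_{0}$ matches the Hamming expression $q^{n}/\sum_{i=0}^{e}\binom{n}{i}(q^{2}-1)^{i}$; and third, proving that for $n \geq n_{0}$ the maximum of $f(x)/f_{x}$ over $x \in S$ is attained at $x=0$. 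Once these are in hand the ``if and only if'' is tautological, since $\max_{x \in S} f(x)/f_{x} \geq f(0)/f_{0}$ always, so the two bounds coincide precisely when the maximum is at $x=0$.

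First, $f_j \geq 0$ is automatic because each coefficient is a square, so $f_x \geq 0$ on $N$ and $f_x > 0$ on $S$ whenever $\sum_{i=0}^{e} K_i(x) \neq 0$. The sign condition $f(x) \leq 0$ on $N \setminus S$ is imposed by the same mechanism used in the $d=5$ case and can be established from the Krawtchouk reciprocity $\binom{n}{j}(q^{2}-1)^{j} K_{i}(j) = \binom{n}{i}(q^{2}-1)^{i} K_{j}(i)$ together with the orthogonality $\sum_{j=0}^{n} K_{a}(j) K_{j}(b) = q^{2n}\delta_{a,b}$, the same tools used in~\cite{ashikhmin99,ketkar06}. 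These two identities also give the key evaluation
\begin{eqnarray*}
f(0) &=& \sum_{j=0}^{n} \Bigl(\sum_{i=0}^{e} K_{i}(j)\Bigr)^{2} \binom{n}{j}(q^{2}-1)^{j} \\
&=& \sum_{i,i'=0}^{e} \binom{n}{i}(q^{2}-1)^{i} \sum_{j=0}^{n} K_{i'}(j) K_{j}(i) \;=\; q^{2n} \sum_{i=0}^{e} \binom{n}{i}(q^{2}-1)^{i},
\end{eqnarray*}
and since $K_{i}(0) = \binom{n}{i}(q^{2}-1)^{i}$ one also has $f_{0} = \bigl(\sum_{i=0}^{e} \binom{n}{i}(q^{2}-1)^{i}\bigr)^{2}$, so $q^{-n} f(0)/f_{0}$ is exactly the Hamming denominator.

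The substantive step is proving $f(0)/f_{0} \geq f(x)/f_{x}$ for every $x \in \{1,\ldots,2e\}$ and every $q \geq 2$ once $n$ is sufficiently large. This is a family of $2e$ polynomial inequalities in $n$ and $q^{2}-1$, each of the same shape as Lemmas~\ref{lem:compare0}--\ref{lem:compare3}. The natural organization is asymptotic in $n$: a degree count using the explicit form of $K_{j}(x)$ shows that $f(0)/f_{0}$ is $\Theta(q^{2n}/n^{e})$ as $n \to \infty$ with $q$ fixed, whereas for each $x \in \{1,\ldots,2e\}$ one has $f(x)/f_{x} = O(q^{2n}/n^{e+1})$, because the top-degree-in-$n$ piece of the squared partial sum $(\sum_{i=0}^{e} K_i(j))^2$ survives in the numerator only at $x=0$. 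Matching leading coefficients then yields an explicit threshold $n_{0}(e,q)$, and the $q$-dependence factors into common powers of $q^{2}-1$ that cancel on both sides, making the bound uniform in $q \geq 2$.

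The main obstacle, and the reason the lemma is advertised as a sketch, is the finite-$n$ bookkeeping in this last step. The elementary ``approximate and collect'' style used for $d=5$ is already somewhat delicate, and as $e$ grows the lower-order terms of $(\sum_{i=0}^{e} K_{i}(x))^{2}$ carry mixed signs that can dominate for moderate $n$ and reverse the inequality in the wrong direction. A clean proof uniform in $e$ almost certainly requires a sharper Krawtchouk estimate in the spirit of~\cite{levenshtein95,mceliece77}: substituted into the ratio $f(x)/f_{x}$, such an estimate should collapse the $2e$-fold case analysis into a single inequality in $(n,q,e)$ and pin down $n_{0}$ in closed form.
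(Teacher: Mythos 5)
Your decomposition is sound and, where it can be checked, more careful than the paper's own argument. The setup is identical: both you and the paper take $f_j=\bigl(\sum_{i=0}^{e}K_i(j)\bigr)^2$, reduce the claim to showing that $\max_{x\in S}f(x)/f_x$ is attained at $x=0$, and treat the ``if and only if'' as essentially tautological once that is granted. What you add that the paper omits is the explicit evaluation $f(0)=q^{2n}\sum_{i=0}^{e}\binom{n}{i}(q^{2}-1)^{i}$ and $f_0=\bigl(\sum_{i=0}^{e}\binom{n}{i}(q^{2}-1)^{i}\bigr)^{2}$ via Krawtchouk reciprocity and orthogonality; this is exactly what makes $q^{-n}f(0)/f_0$ equal the Hamming expression, and it is consistent with the explicit $d=5$ values in Lemma~\ref{th:hamming2}. (One gloss: condition (ii) of Theorem~\ref{maxbound} holds because $f(x)=q^{2n}a_x$, where the $a_x$ are the nonnegative Krawtchouk linearization coefficients of $\bigl(\sum_{i\le e}K_i\bigr)^2$ and vanish for $x>2e$; this effectively forces $S=\{0,\dots,2e\}$ rather than an arbitrary subset, a point the lemma's statement blurs.)

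Where you genuinely diverge is the key step $f(0)/f_0\ge f(x)/f_x$ for $x=1,\dots,2e$. The paper tries to prove this term by term: it asserts that it suffices to show $f_jK_j(0)/f_0-f_jK_j(y)/f_y\ge 0$ for every $j$, then ``verifies'' this by multiplying by $K_e(i)$, summing over $i$, and invoking $\sum_iK_e(i)K_i(j)=q^{n}\delta_{ej}$ (the normalization should be $q^{2n}$, as your computation and the $d=5$ values confirm). That argument is not valid: the term-by-term inequality is false in general --- if it held, no threshold $n\ge 7$ would be needed in Lemmas~\ref{lem:compare0}--\ref{lem:compare3} --- and deriving a true consequence of a desired inequality does not establish it. Your asymptotic route, $f(0)/f_0=\Theta(q^{2n}/n^{e})$ versus $f(x)/f_x=O(q^{2n}/n^{e+1})$ for $x\ge 1$, matches the $e=2$ data and is the more promising strategy; you are also right that the genuine open content is the finite-$n$ threshold $n_0(e,q)$, which neither you nor the paper pins down. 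In short: same framework and same honest status as a sketch, but your version locates the real difficulty correctly, while the paper's closing manipulation would not survive being made precise.
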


\begin{proof} In this proof, we propose $f_x$ satisfying Theorem~\ref{maxbound}.
Let $f_x = \Big(\sum_{i=0}^{e}K_i(x)\Big)^2 $ and $f(x) = \sum_{j=0}^n f_j
K_j(q,x)$.\\\\
\begin{eqnarray}
\frac{f(x)}{f_x}  =  \frac{ \sum_{j=0}^n f_j K_j(q,x)  }{f_x  }
\end{eqnarray}
And our goal is to find $\max \{f(0)/f_0,f(1)/f_1,..., f(d-1)/f_{d-1} \}$ that may equal to $f(0)/f_0$.\\

 Now, for $x=0$, we find that
\begin{eqnarray}\label{eq38}
\frac{f(0)}{f_0}  &=&  \frac{\sum_{j=0}^n f_j K_j(0)}{f_0} \nonumber \\ &=&
K_0(q,0)+\frac{f_1 K_1(q,0)}{f_0}+......+\frac{f_n K_n(q,0)}{f_0}
\end{eqnarray}
or
\begin{eqnarray}
\frac{f(0)}{f_0}  =\frac{\sum_{j=0}^n \big(\sum_{i=0}^{e}K_i(j)\big)^2
K_j(0)}{\big(\sum_{i=0}^{e}K_i(0 )\big)^2}\nonumber
\end{eqnarray}

 and for any other value of $y \in \{1,2,...,d-1 \}$ , we find that
\begin{eqnarray}\label{eq39}
\frac{f(y)}{f_y}  &=&  \frac{ \sum_{j=0}^n f_j K_j(y)  }{f_y   }  \nonumber \\
&=& \frac{f_0 K_0(q,y)}{f_y}+\frac{f_1 K_1(q,y)}{f_y}+......+\frac{f_n
K_n(q,y)}{f_y}
\end{eqnarray}
or
\begin{eqnarray}
\frac{f(y)}{f_y} = \frac{\sum_{j=0}^n \big(\sum_{i=0}^{e}K_i(j) \big)^2
K_j(y)}{\big(\sum_{i=0}^{e}K_i(y)\big)^2}\nonumber
\end{eqnarray}

From \ref{eq38} and \ref{eq39}, simply  we need to show that
\begin{eqnarray}
\frac{f(0)}{f_0} - \frac{f(y)}{f_y} \geq 0
\end{eqnarray}

\begin{eqnarray} \label{eq42}
 \frac{f(0)}{f_0} - \frac{f(y)}{f_y}&=&\frac{\sum_{j=0}^n \big(\sum_{i=0}^{e}K_i(j)\big)^2 K_j(0)}{\big(\sum_{i=0}^{e}K_i(0 )\big)^2} -\frac{\sum_{j=0}^n \big(\sum_{i=0}^{e}K_i(j)\big)^2 K_j(y)}{\big(\sum_{i=0}^{e}K_i(y)\big)^2}
  \nonumber \\&=& \sum_{j=0}^n \left(\big(\sum_{i=0}^{e}K_i(j)\big)^2 \Big(\frac{K_j(0)}{\big(\sum_{i=0}^{e}K_i(0 )\big)^2}-\frac{K_j(y)}{\big(\sum_{i=0}^{e}K_i(y
  )\big)^2}\Big)\right)
\nonumber \\ &=& \sum_{j=0}^n \left( \frac{f_j K_j(q,0)}{f_0}-\frac{f_j
K_j(q,y)}{f_y} \right)
\end{eqnarray}

in the previous equation, $f_j > 0$ and $f_y > 0$, so, if we prove  that
\begin{eqnarray} \label{equ-seekconstant}
 \frac{f_j K_j(q,0)}{f_0}-\frac{f_j K_j(q,y)}{f_y} \geq 0,
\end{eqnarray}
then the claim holds. As shown in~\cite{delsarte1973},~\cite{Lint99}, we seek a constant value for the left side in~\ref{equ-seekconstant}, so, multiplying both sides by $K_e(i)$\\

\begin{eqnarray}
\frac{K_e(i)K_i(q,0)}{f_0}-\frac{K_e(i)K_i(q,y)}{f_y} \geq 0
\end{eqnarray}
and take $\sum_{i=0}^{n}$
\begin{eqnarray}
\sum_{i=0}^{n} \left(\frac{K_e(i)K_i(q,0)}{f_0}-\frac{K_e(i)K_i(q,y)}{f_y}\right) \geq 0 \nonumber \\
\frac{\sum_{i=0}^{n}K_e(i)K_i(q,0)}{f_0}-\frac{\sum_{i=0}^{n}K_e(i)K_i(q,y)}{f_y}
\geq 0
\end{eqnarray}
from \cite{Lint99}, given that $\sum_{i=0}^{n}K_e(i)K_i(q,j) = q^n \delta _{ej}$, by substitution, \\

\begin{eqnarray}
\frac{q^n\delta_{e0}}{f_0}-\frac{q^n\delta_{ey}}{f_y} \geq 0 \nonumber\\
\frac{\delta_{e0}}{f_0}-\frac{\delta_{ey}}{f_y} \geq 0
\end{eqnarray}
Now, $\delta_{e0} =1$, and $\delta_{ey}= 1$ or $0$; and obviously $f_y \geq
f_0$.  So, if $y=e \Longrightarrow \delta_{e0}/f_0 \geq 0$, and similarly,
$\delta_{ey} =1 \Longrightarrow f_y -f_0 \geq 0$.

\end{proof}

\end{document}